\documentclass{amsart}

\usepackage{amsmath}
\usepackage{amssymb}
\usepackage{amscd}

\newtheorem{theorem}{Theorem}[section]

\newtheorem{definition}{Definition}[section]

\numberwithin{equation}{section}

\begin{document}
	
	\begin{center}
		{\bf All Quantum Probability
			viewed in 
			\\
			Complex Projective Geometry
			}
		
		\vskip 0.2cm
		 Stephen Bruce Sontz
		
		Centro de Investigaci\'on en 
		Matem\'aticas, A.C.
		
		Guanajuato, Mexico

		\vskip 0.3cm 
		In memory of Franciszek Hugon Szafraniec

	\end{center}
	
	\vskip 0.6cm

\section*{Abstract}

In a recent paper it was shown that all the 
Hilbert space 
{\em formulas} for 
quantum probabilities can be realized as
{\em functions} of 
geometric properties of the associated 
projective space, but those functions were
expressed using the structures of the
associated Hilbert space. 
In this paper a direct description 
of all these probabilities is given 
as formulas involving only the geometric properties of
the projective space itself without referring to the
associated Hilbert space theory. 
In large part this depends on a projection theorem
for complex projective space which is analogous to the 
projection theorem for Hilbert spaces.  
The importance 
of this 
is that this 
exhibits quantum probability in terms of the 
geometry of a Riemannian metric in a non-linear 
K\"ahler manifold 
without any reference to a linear 
Hilbert space. 
As such this is a part of a larger program
of the geometrization of physics. 
This opens the possibility of 
generalizations of quantum theory in 
other similar geometric settings. 
The theory presented includes
projective spaces of both finite and infinite dimension. 
Some comments explain how quantum theory based on a
von Neumann algebra is compatible with this 
approach.

\vskip 0.4cm \noindent
Keywords:
Quantum Probability, Complex Projective Space,  Geometrization of
Quantum Theory

\section{Notation and Preliminaries}

This section has a review of known material with some new 
notation. 
The main references are \cite{oneill}, \cite{sbs} and \cite{terek}.
For the initial period of research going up to 1999 
on using complex projective space
as the fundamental setting for quantum theory see the
papers 
\cite{anandan}, \cite{ashtekar-schilling}, \cite{cirelli1},
\cite{cirelli2},  \cite{gibbons}, \cite{heslot},
\cite{hughston}, \cite{kibble}, \cite{manin},
\cite{marsden-ratiu} and \cite{schilling}.  
A more recent and 
extensive overview is \cite{freed}, 
which is based on lectures given in 2023. 

The probability functions of quantum theory
are initially given in the Hilbert space 
setting as explicit formulas in \cite{sbs}. 
In order to find the corresponding explicit formulas in the 
projective space setting, it is first necessary 
to review the Hilbert space
setting.
So from now on $\mathcal{H}$ denotes 
a complex Hilbert space, not necessarily
separable. 
But to 
avoid trivial cases we always assume that
$\dim_{\mathbb{C}} \mathcal{H} \ge 2$. 
The notation for the norm on $\mathcal{H}$ 
is $|| \cdot ||$ and for the Hermitian inner product it
is $\langle \cdot , \cdot \rangle$ with the 
convention that this is conjugate linear in 
the first entry and linear in the second. 
We use the notation 
$$
\mathcal{L} (\mathcal{H}) = 
\{ T : \mathcal{H} \to \mathcal{H} ~|~
T {\rm ~is~linear~and~bounded} \},
$$ which is
a unital $C^{*}$-algebra with respect to the
operator norm and conjugation given by 
taking the adjoint. 
It is also a Type~I factor. 

Next, for any vector space $V$ we put
$V^{\rm x}:= V \setminus \{ 0\}$. 
Then the {\em complex projective space} 
associated to $\mathcal{H}$ is defined as 
$$
  \mathbb{CP} (\mathcal{H}) := 
  \mathcal{H}^{\rm x} / \mathbb{C}^{\rm x},
$$
where we quotient by the multiplicative action
of the non-zero 
scalars $ \mathbb{C}^{\rm x} $ on
the set $ \mathcal{H}^{\rm x}$. 
Since $\mathcal{H}$ has a norm, this is 
equivalent to 
$ \mathbb{CP} (\mathcal{H}) := \mathbb{S}(\mathcal{H})/S^{1}$, 
where 
$$
\mathbb{S}(\mathcal{H}) := \{ \psi \in \mathcal{H}
~|~ || \psi || =1  \} \qquad \mathrm{and} \quad 
S^{1}:= \mathbb{S}(\mathbb{C}) = 
 \{ z \in \mathbb{C} ~|~ |z| =1\}
$$
are called the unit sphere in $\mathcal{H}$ and the unit circle,
respectively. 

Let $\pi_{1} : \mathcal{H}^{\rm x} \to \mathbb{S}(\mathcal{H})
$ be the quotient map $\pi_{1} (\psi) := \psi / || \psi ||  $ 
for all $\psi \in \mathcal{H}^{\rm x} $. 

Let 
$ \pi_{2} : \mathbb{S}(\mathcal{H}) \to 
\mathbb{CP} (\mathcal{H})$
denote the quotient map sending a unit vector to its 
equivalence class modulo the action of $S^{1}$. 

Finally, let $\pi_{3} := \pi_{2} \pi_{1} : 
\mathcal{H}^{\rm x} \to \mathbb{CP} (\mathcal{H})$. 

It is known that $ \mathbb{CP} (\mathcal{H})$
is a K\"ahler manifold, including the important 
case when $\mathcal{H}$ is infinite dimensional, provided 
one uses a natural generalization of the finite
dimensional definitions 
to that setting. 
(See \cite{schilling}.)
Recall that a K\"ahler manifold is a complex 
manifold with compatible almost complex structure,
symplectic form and Riemannian metric. 
In the case of  $ \mathbb{CP}  (\mathcal{H})$
this Riemannian metric is called the
{\em Fubini-Study metric}, whose associated metric on the 
manifold,
defined as the infimum of  the lengths of $C^{1}$ curves, 
is denoted as 
$$
 d(x,y) \quad {\rm for~} x,y \in \mathbb{CP}  (\mathcal{H}).
$$

The Fubini-Study metric can be defined in terms of the
standard Riemannian metric on $\mathbb{S(\mathcal{H})}$ 
(induced from its embedding in $\mathcal{H}$) 
by using $\pi_{2}$. 
Then {\em horizontal tangent vectors} in the tangent bundle
$T(\mathbb{S(\mathcal{H})})$ are defined to
be the elements in $(\ker T \pi_{2})^{\perp} $. 
Also a curve in $ \mathbb{S(\mathcal{H})}$ is said to be
{\em horizontal} if all of its tangent vectors are horizontal. 
Every curve $\gamma_{C}$ in $\mathbb{CP}  (\mathcal{H})$ has a unique {\em horizontal lift} $\gamma_{S}$ in $\mathbb{S(\mathcal{H})}$, 
that is, $\gamma_{C}  =  \pi_{2} \circ \gamma_{S}$ and
$\gamma_{S}$ is a horizontal curve  
in $\mathbb{S(\mathcal{H})}$. 
This way $\pi_{2}$  induces a bijection between the
horizontal geodesics of $\mathbb{S(\mathcal{H})}$ and all
the geodesics of $ \mathbb{CP}  (\mathcal{H})$.
Also the length of a horizontal curve
in $\mathbb{S(\mathcal{H})}$ is equal to the length of 
its image under $\pi_{2}$ in $ \mathbb{CP}  (\mathcal{H})$. 
This geometric framework for quantum probability is based
on the general results that hold for $\pi_{2}$ as a
Riemannian submersion together with the specific 
geometries of $ \mathbb{S(\mathcal{H})}$ and
$ \mathbb{CP}  (\mathcal{H})$. 
See \cite{oneill} and \cite{terek} for details. 

In terms of geometry some important facts are that
the diameter of $ \mathbb{S(\mathcal{H})}$ is $\pi$ while that of
$ \mathbb{CP}  (\mathcal{H})$ is $\pi/2$. 
The diameter is defined in terms of the 
Riemannian metric and so is independent of 
any metric induced by an embedding in another space. 
Also, both spaces have the property that if two points
have a distance strictly less than their diameter, then
they can be connected by a unique shortest geodesic. 
On the other hand if two points have a distance equal to the
diameter, then they can be connected  with infinitely 
many shortest geodesics. 

The isomorphic lattice structures of these spaces 
in terms of their subspaces in the appropriate category 
plays as well an essential role in quantum probability 
theory, though that aspect is not emphasized here. 

The (geometric!) points of the projective space 
$ \mathbb{CP}  (\mathcal{H})$ 
are also called the {\em (pure) states} of the
quantum system represented by $\mathcal{H}$. 
However, we often say that an element in
$\mathcal{H}^{\rm x}$
or in $ \mathbb{S}(\mathcal{H})$ 
{\em represents} the state given by its image under
the mapping $\pi_{3}$ or $\pi_{2}$, respectively. 

It is important to note that $\mathcal{H}$ is also a
vector space over the real numbers $\mathbb{R}$ and that
it becomes a real Hilbert space with the inner product
\begin{equation}
\label{real-inner-product}
    \langle \phi, \psi \rangle_{\mathbb{R}} :=
   \Re \langle \phi, \psi \rangle,	
\end{equation}
where $\Re$ denotes the real part. 
Notice that the associated norm of this real inner product
coincides with that of the complex inner product, since
\begin{equation*}
	\langle \phi, \phi \rangle_{\mathbb{R}} =
	\Re \langle \phi, \phi \rangle = 
	\langle \phi, \phi \rangle = || \phi ||^{2} \quad 
	{\rm for~all~} \phi \in \mathcal{H}. 
\end{equation*}
For example, the definition above of 
$\mathbb{S}(\mathcal{H})$ gives exactly the same 
the unit sphere
with respect to either definition of the norm. 
Hereafter, we will use only the notation $|| \cdot ||$ 
for the norm. 
Actually, $\mathcal{H}$ is a K\"ahler manifold whose
symplectic form is the imaginary part of the 
Hermitian inner product.

The {\em projective angle} $a$ between vectors 
$\phi,\psi \in \mathcal{H}^{\rm x}$ 
is defined to be
$$
    a:= \cos^{-1} 
\left( \dfrac{|\langle \phi, \psi \rangle|}{|| \phi || \, ||\psi||} \right)
$$
where we take the branch of the inverse cosine satisfying 
$\cos^{-1} : [0, 1] \to [0,\pi/2]$. 
Note that the projective angle passes down to the 
projective space, justifying the terminology. 

The {\em Euclidean angle} $\theta$ between vectors 
$\phi,\psi \in \mathcal{H}^{\rm x}$ is defined to be
$$
\theta:= \cos^{-1} 
\left( \dfrac{\langle \phi, \psi \rangle_{\mathbb{R}}}
{|| \phi || \, ||\psi||} \right)
$$
where we take the branch of the inverse cosine satisfying 
$\cos^{-1} : [-1, 1] \to [0,\pi]$. 
The Euclidean angle is what is used in Euclidean geometry.
However, it does not pass down to the projective space. 

Notice that these two definitions of angle coincide 
if and only if $ \langle \phi, \psi \rangle \ge 0 $.

Moreover 
notice that the inner product $\langle \psi, \phi \rangle$
for unit vectors $\psi, \phi$ does not pass 
to the projective space, while 
$|\langle \psi, \phi \rangle|$ does. 
So for $x, y \in \mathbb{CP} (\mathcal{H})$ we define 
the {\em absolute inner product}
$$ 
|\langle x, y \rangle|:= |\langle \psi, \phi \rangle| 
$$
for any choice of $\psi \in \pi_{2}^{-1}(x)$ and
$\phi \in \pi_{2}^{-1}(y)$. 
(Beware that neither this terminology nor 
this notation implies that
$ \langle x, y \rangle$ exists and that it has an
absolute value.)
Another way to write this definition is
$$ 
|\langle \pi_{2}(\psi), \pi_{2}(\phi) \rangle|:= |\langle \psi, \phi \rangle| 
$$
for all unit vectors $\psi, \phi \in \mathcal{H}$.
Notice that $|\langle x , x \rangle| = 1 $ for
all $x \in \mathbb{CP} (\mathcal{H})$. 

Next we define $x,y \in \mathbb{CP} (\mathcal{H})$ 
to be {\em orthogonal}
if $ |\langle x, y \rangle| = 0 $ in which
case we denote this as $x \perp y$. 
Clearly, $x \perp y$ if and only if $y \perp x$. 
For any subset $X \subset \mathbb{CP} (\mathcal{H})$ 
we define its {\em orthogonal complement} by
$$
X^{\perp}:= \{  y \in \mathbb{CP} (\mathcal{H}) ~|~
x \perp y {\rm ~for~all~} x \in X \}. 
$$
It follows that $X \cap X^{\perp} = \emptyset$, the
empty set, since assuming that there exists some
$x \in X \cap X^{\perp}$ it follows that $x \perp x$ 
or, equivalently, $|\langle x , x \rangle | = 0$ which 
contradicts $|\langle x , x \rangle | = 1$. 

The absolute inner product is well defined as
a function from $\mathbb{CP} (\mathcal{H}) \times \mathbb{CP} (\mathcal{H})$
to $[0,1]$ and is invariant under the projective unitary group.
As such it must be considered as a geometric 
property. 
However, the goal here is to realize this and other related
functions as {\em formulas} with respect to the K\"ahler space 
structures of the complex analytic manifold 
$\mathbb{CP} (\mathcal{H})$. 

Also the unit sphere $\mathbb{S}(\mathcal{H})$ is an
embedded real smooth (that is, $C^{\infty}$) 
submanifold of the real smooth manifold 
$\mathcal{H}$. 
Actually, $\mathbb{S}(\mathcal{H})$ is the total space
of a principal fiber bundle whose real Lie group
is $S^{1}$ and whose base space is 
$ \mathbb{CP}  (\mathcal{H})$. 
The three spaces of this bundle are $C^{\infty}$ manifolds
given the appropriate definitions for the important 
infinite dimensional case.
A curious topological fact, 
which will not be used here,
is that for a separable, infinite dimensional
Hilbert space we have that 
$ \mathbb{CP}  (\mathcal{H})$ is an
Eilenberg-Mac Lane space of type $K(\mathbb{Z},2)$. 

The results presented here are consequences 
mainly of the 
smooth Riemannian structure of these
manifolds. 
More precisely, 
the  symplectic structure of some of these 
manifolds is an incidental feature that is not used here. 
In particular, $\mathbb{S(\mathcal{H})}$ 
typically does not have a symplectic structure,
though it always has a Riemannian metric. 
The results also depend on the absolute value 
of the Hermitian inner product, which we will
see is a function of a geometric property of 
$ \mathbb{CP}  (\mathcal{H})$. 
However, the Hermitian inner product on $\mathcal{H}$ does not pass down to the 
projective space. 

We recall that an {\em event} of 
a Hilbert space $\mathcal{H}$ is defined to be
any $E \in \mathcal{L} (\mathcal{H})$ 
satisfying $E = E^{2} = E^{*}$ or, equivalently
in the language of Hilbert space theory, $E$ is
an {\em (orthogonal) projection}. 
It bears mentioning that the $^{*}$ in 
$E^{*}$ refers to the adjoint with
respect to the Hermitian inner product 
$\langle \cdot, \cdot \rangle $ of $\mathcal{H}$. 

The set of events of $\mathcal{H}$ is in bijection 
with the set of closed subspaces of $\mathcal{H}$
via $E \mapsto {\rm Ran} \, E$, the range of $E$. 
Also, the set 
of closed subspaces of $\mathcal{H}$ is in bijection with the set of (complex) projective
subspaces of $\mathbb{CP} (\mathcal{H})$
via $V \mapsto \pi_{3}(V^{\rm x})$ for 
all closed complex subspaces $V$ of $\mathcal{H}$. 
This last statement 
is in some narrow sense simply a matter of {\em logical}
definition, because the projective
subspaces of $\mathbb{CP} (\mathcal{H})$
can be defined to be those subsets of the form  
$\pi_{3}(V^{\rm x})$ for some (actually unique) 
closed complex vector subspace $V$ of $\mathcal{H}$.
However, this is not a good 
{\em geometrical}
definition, which
should be given only in terms of the 
intrinsic\footnote{This is in resonance with the 
Theorema Egregium of Gauss, namely that 
curvature is an intrinsic geometric property 
of a surface and not of its embedding in another space.}
properties of $\mathbb{CP} (\mathcal{H})$ without any
reference to other external structures, which in this 
case are those of the Hilbert space $\mathcal{H}$. 
At least in the finite dimensional case, 
the appropriate intrinsic structures in this situation are
the topology,
the Riemannian metric and the complex manifold structure,
since the complex projective
subspaces of $\mathbb{CP} (\mathcal{H})$ are precisely 
the connected, 
totally geodesic, complex submanifolds and therefore this
combination of properties can be taken as a good intrinsic 
geometric definition. 
A corollary of this viewpoint is that the notation 
$\mathbb{CP} (\mathcal{H})$ should be changed to 
something without any reference at all to 
the Hilbert space $\mathcal{H}$, though that shall not
be done here. 

To simplify notation we define
$\pi' (E) := \pi_{3} ( ( {\rm Ran} \, E)^{\rm x})$. 
Then $\pi'$ is a bijection from the set of events to the
set of complex projective subspaces. 
In particular, for $E=0$ we have 
$\pi' (0) = \emptyset $, the empty projective subspace
while for $E=I$, the identity, we have 
$\pi' (I) = \mathbb{CP} (\mathcal{H})$. 

Here are some useful properties of geodesics in 
$ \mathbb{S}(\mathcal{H})$ and $ \mathbb{CP} (\mathcal{H})$.
These expand on what was presented earlier. 
(See \cite{manin}, \cite{oneill}.)
	\begin{itemize}
		\item 
		Between any pair of points
		$\phi,\psi \in \mathbb{S(\mathcal{H})}$ there
		is a shortest geodesic with length 
		$\theta \in [0, \pi]$, 
		where $\theta = \cos^{-1} \langle \phi, \psi \rangle_{\mathbb{R}}$ is the
		Euclidean angle between 
		$\phi$ and $ \psi$. 
		This geodesic is unique if and only if 
		$\theta < \pi$. 
		
		\item 
		Between any pair of points
		$ x,y \in \mathbb{CP} (\mathcal{H})$ there
		is a shortest 
		geodesic with length 
		$a \in [0, \pi/2]$, 
		where $a = \cos^{-1} |\langle x,y \rangle|$ is
		the projective angle between $x$ and $y$. 
		This geodesic is unique if and only if $a < \pi/2$. 
	\end{itemize}

The rest of the paper is organized as follows. 
In Section 2 we review a previously known special
case of our results in order to establish some
background and to provide motivation. 
Then in Section 3 there is a Projection Theorem,
which is a key geometric property of complex projective
spaces. 
In section \ref{one-event} the probability of
one event is given in geometric terms, while 
Section 5 gives the consecutive probability of
two time ordered events in geometric terms. 
Section 6 describes how this approach is 
compatible with quantum theory based on
a von Neumann algebra. 
Finally, there is a section with some concluding remarks
and possible generalizations of this research.

\section{A Special Case}

One special case of the goal of this 
paper is already known. 
(See \cite{ashtekar-schilling}, \cite{schilling}.)
This is the case 
of the Born Rule for the transition probability 
as expressed in a Hilbert space setting as the
probability of finding a system in a state represented by 
a unit vector $\psi \in \mathcal{H}$ given
that it currently is in  a
state represented by 
a unit vector $\phi \in \mathcal{H}$. 
This probability is given by 
$$
{\rm P_{Hilb}} (\psi | \phi) := 
|\langle \psi, \phi \rangle|^{2}. 
$$
To see that 
this function passes to the projective space 
one puts $x:= \pi_{2}(\psi')$ and $y:= \pi_{2}(\phi')$
and verifies that the probability of $x$ 
given $y$ is
\begin{equation}
	\label{x-given-y}
{\rm P_{Proj} } (x|y) := 
|\langle \psi', \phi' \rangle|^{2} = 
{\rm P_{Hilb}} (\psi' | \phi')
\end{equation}
is well defined for any choice of 
$\psi' \in \pi_{2}^{-1}(x)$ and of 
$\phi' \in \pi_{2}^{-1}(y)$.

Given this notation we can rewrite 
\eqref{x-given-y} as 
\begin{equation*}
	{\rm P_{Proj} } (x|y) := 
	|\langle x, y \rangle|^{2}. 
\end{equation*}

Nonetheless the formula on the right side
of \eqref{x-given-y} for evaluating the
function $ {\rm P_{Proj}}$ depends on structures 
of the Hilbert space $\mathcal{H}$. 
However, one has the following result 
\begin{equation}
\label{projective-probability}
{\rm P_{Proj}  } (x|y) = 
|\langle x, y \rangle|^{2} = 
\cos^{2} d(x,y), 
\end{equation}
where $d(x,y) \in [0,\pi/2]$ is the length of a
(not necessarily unique) shortest geodesic 
with respect to the Fubini-Study metric
from $x$ to $y$ in 
$\mathbb{CP} (\mathcal{H})$ as is proved in 
\cite{ashtekar-schilling} and \cite{schilling}. 
This fulfills the goal of expressing this
probability in purely geometric terms of 
$\mathbb{CP} (\mathcal{H})$. 

This previously known result is a a special
case of the result proved in Section~\ref{one-event}. 
As an added benefit \eqref{projective-probability}
gives us a formula for the absolute inner
product in terms of geometric properties of 
$\mathbb{CP} (\mathcal{H})$, since taking the 
positive square root of both sides again 
produces an equality because of the range
of $d(x,y)$.

\section{A Projection Theorem}

We will now see an important geometric fact 
about complex projective spaces. 
This result is analogous to as well as being based on
the Projection Theorem  of Hilbert space theory. 
First, here is a definition. 

\begin{definition}
Suppose that $Y \subset \mathbb{CP} (\mathcal{H})$ is any 
non-empty subset and $x \in \mathbb{CP} (\mathcal{H})$. 
Then the {\em distance from $x$ to $Y$} is defined to be 
$d(x,Y):=\inf \{ d (x,y)~|~ y \in Y\}$.
\end{definition}

Before stating the projection 
theorem it may be appropriate to 
remark that the empty subset $\emptyset$ of 
$ \mathbb{CP} (\mathcal{H})$ is 
a projective subspace whose dimension is $-1$. 
Also, since $\mathcal{H}$ can have infinite dimension, 
it (as well as some of its projective subspaces) could
fail to be compact. 
Now here is the theorem itself.

\begin{theorem}
{\bf Projection Theorem of Projective Space:}
Let $S \subset \mathbb{CP} (\mathcal{H})$ be a 
non-empty projective
subspace and $x \in \mathbb{CP} (\mathcal{H})$. 
Then there exists a point $y \in S$ such that
$d(x,S) =  d (x,y)$, which is equal to 
the projective angle 
between $x$ and $y$. 
There are two mutually exclusive cases.
\begin{itemize}
	
\item 
If $x \notin S^{\perp}$, then $y$ is unique 
and $y = \pi_{3}(E\psi)$, where 
$E= \pi'^{-1}(S)$, 
for any $\psi \in \mathbb{S(\mathcal{H})}$ 
satisfying $\pi_{2}(\psi) = x$. 
Moreover, $ 0 \le d(x,y) < \pi/2$. 
In this case we define the {\em projection of $x$ on $S$} 
to be ${\rm Pj}(x \downarrow S) :=y$. 

\item 
If $x \in S^{\perp}$, then $d (x,S) = d(x,y) = \pi/2$ 
for all $y \in S$. 
In this case $y$ is unique if and only if 
$\dim S = 0$, that is, $S$ is a single point. 
We also define the {\em projection of $x$ on $S$} 
to be ${\rm Pj}(x \downarrow S) :=S$. 
\end{itemize}

\end{theorem}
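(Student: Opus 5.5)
The plan is to reduce everything to the formula $d(x,y)=\cos^{-1}|\langle x,y\rangle|$ from \eqref{projective-probability}. Since $\cos^{-1}$ is strictly decreasing on $[0,1]$, minimizing $d(x,y)$ over $y\in S$ is the same as maximizing $|\langle \psi,\phi\rangle|$ over unit vectors $\phi\in V:={\rm Ran}\,E$, where $E=\pi'^{-1}(S)$ and $\psi$ is a fixed unit representative of $x$. The statement that $d(x,y)$ equals the projective angle is then immediate from that formula, once a minimizer $y$ is shown to exist.

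The key estimate comes from the Hilbert space Projection Theorem. For a unit vector $\phi\in V$ we have $\langle\phi,\psi\rangle=\langle E\phi,\psi\rangle=\langle\phi,E\psi\rangle$. Cauchy--Schwarz then gives
\[
|\langle\psi,\phi\rangle|\le \|E\psi\|.
\]
Equality holds if and only if $\phi$ is a scalar multiple of $E\psi$, provided $E\psi\neq0$. Hence $\sup_{y\in S}|\langle x,y\rangle|=\|E\psi\|$, which does not depend on the choice of $\psi$ in $\pi_2^{-1}(x)$ because $\|E(\lambda\psi)\|=\|E\psi\|$ for $|\lambda|=1$.

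First case: $x\notin S^\perp$. Then some $\phi\in V$ has $\langle\phi,\psi\rangle\neq0$, so $E\psi\neq0$. The supremum is attained exactly at $\phi\in S^1\cdot E\psi/\|E\psi\|$, that is, at the single point $y=\pi_3(E\psi)\in S$. Uniqueness is the equality case of Cauchy--Schwarz. Since $\|E\psi\|>0$, we get $d(x,y)=\cos^{-1}\|E\psi\|<\pi/2$. Independence of the representative $\psi$ follows because $E(\lambda\psi)=\lambda E\psi$ has the same image under $\pi_3$.

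Second case: $x\in S^\perp$. Then $|\langle x,y\rangle|=0$ for every $y\in S$, so $d(x,y)=\pi/2$ for all $y\in S$, and the infimum is attained at every point of $S$. Consequently $y$ is unique if and only if $S$ is a single point, i.e.\ $\dim V=1$, i.e.\ $\dim S=0$. The two cases are mutually exclusive by definition.

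The main obstacle is the first case. One must check that $x\notin S^\perp$ is really equivalent to $E\psi\neq0$, and this uses $\langle\phi,\psi\rangle=\langle\phi,E\psi\rangle$ for $\phi\in V$. One must also handle the infinite-dimensional setting, where $S$ need not be compact, so existence of a minimizer cannot come from a compactness argument. This is why I would get existence directly from the explicit maximizer $E\psi$, which the Hilbert Projection Theorem supplies because $V$ is closed, rather than from any topological argument on $\mathbb{CP}(\mathcal{H})$.
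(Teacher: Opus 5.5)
Your proof is correct, but it takes a different route from the paper. The paper argues ``in the style of Euclid''. It takes the Hilbert-space perpendicular segment from $\psi$ to $E\psi$ and pushes it via $\pi_1$ to a great-circle arc of length $\theta$ in $\mathbb{S}(\mathcal{H})$. To rule out any other $z\in S$, it horizontally lifts a shortest geodesic from $x$ to $z$, ending at some $\phi\in{\rm Ran}\,E$. It then restricts to the real $3$-dimensional span of $\psi$, $\pi_1(E\psi)$ and $\phi$, and applies the \emph{real} Cauchy--Schwarz inequality in coordinates on a $2$-sphere. This is followed by a case split on the sign of $\langle\phi,\psi\rangle_{\mathbb{R}}$, passing to $-\phi$ and the supplementary angle when it is negative.

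You instead invoke the distance formula $d(x,y)=\cos^{-1}|\langle x,y\rangle|$ at the outset and apply the \emph{complex} Cauchy--Schwarz inequality to $\langle\phi,E\psi\rangle$. Because the absolute value of the Hermitian product already removes the phase ambiguity, you need no horizontal lifts, no real coordinates and no sign analysis, and the case $x\in S$, which the paper treats separately, needs no special handling. Existence, uniqueness and the value $d(x,S)=\cos^{-1}\|E\psi\|$ all come out at once, with existence supplied by the explicit maximizer rather than compactness. That value is exactly what Section~4 subsequently computes.

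This costs nothing in rigor relative to the paper. Its proof also relies on the same fact from the preliminaries, when it ``recalls'' that the Euclidean angle $\theta$ between $\psi$ and $\pi_1(E\psi)$ is the Fubini--Study distance from $x$ to $\pi_3(E\psi)$. What the paper's route offers is an explicit geometric picture: the minimizing geodesic from $x$ to $S$ is the image of the perpendicular dropped in $\mathcal{H}$ from $\psi$ to ${\rm Ran}\,E$, and this motivates calling the result a projection theorem. Your route offers brevity and a cleaner treatment of the phase freedom.
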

\noindent
{\bf Remark:}
It is important to note that 
the dichotomy used for these two cases is 
a geometric property of 
$\mathbb{CP} (\mathcal{H})$, since it is based 
on the absolute inner product, which itself is
a geometric property. 

\noindent
{\bf Proof:}
There is a unique event $E \in \mathcal{L} (\mathcal{H})$ such 
that $\pi'(E) = S$. 
Also, there exists 
 $\psi \in \mathbb{S}(\mathcal{H})$ 
such that $\pi_{2} (\psi) = x$. 
Note that $\psi $ is not unique; 
this fact will be taken into consideration 
in the following argument.

The result is verified for $x \in S$, since then 
$y = x$ is clearly the unique point sought for and moreover
$d(x,y)=0= d(x,S)$. 
So for the remainder of this proof we will assume that
$x \notin S$, which is equivalent to $\psi \ne E\psi$ 
and also equivalent to $\psi \notin {\rm Ran} \, E$. 

Here is the proof of the first assertion. 
Notice that $x \notin S^{\perp^{}}$ is equivalent 
to $E \psi \ne 0$, that is, 
$\psi \notin \ker E$. 
In particular, $\psi$ and $E \psi$ are linearly independent 
over $\mathbb{C}$. 
Then the line segment $g$ from $\psi $ to $E \psi$ lies
entirely in $\mathcal{H}^{\rm x}$. 
To verify this we note that otherwise 
$(1-t) \psi + t E \psi = 0$ for some $t\in (0,1)$ implying 
that the non-zero vector $\psi $ is 
an eigenvector of $E$, which in turn 
implies either $\psi \in {\rm Ran} \, E$ or 
$\psi \in \ker E$. 
But these two possibilities have been excluded. 

The remainder of the argument is largely 
geometry in the style of Euclid. 
First, by the Projection Theorem for Hilbert space 
(see \cite{kato}, Chap. 5)
we have that the line segment 
$g$, defined above,
is (or more precisely is the image of) 
a minimal geodesic in the metric of $\mathcal{H}$
from $\psi$ to $ {\rm Ran} \, E $ and that 
$g$ forms a right angle with the vector
$E \psi$ with respect to both inner products. 
Then $g$ and $E\psi$ are the short sides of a right triangle 
$ T$
whose hypotenuse is $\psi$. 

Notice that the Hermitian inner product
of the side $E \psi$ and the hypotenuse
$\psi$ satisfies 
$\langle \psi , E \psi \rangle = || E \psi ||^{2} \in (0,1)$. 
(The end points of this interval are excluded because of
the case we are considering.)
Consequently, the Euclidean angle $\theta$
formed by the side $E \psi$ and the hypotenuse
$\psi$ is acute. 
Following a standard abuse in notation we denote an angle
and its size in radians with the same symbol. 
Accordingly, we have that $\theta \in (0,\pi/2)$. 

Next, since $ \psi, E\psi$ are linearly independent
(over $\mathbb{C}$ and hence also over $\mathbb{R}$), 
the real vector space spanned by $\psi$ and $E\psi$ is
a real 2-dimensional plane, which 
contains the triangle $T$, defined above, 
 and which intersects 
$\mathbb{S}(\mathcal{H})$ in a great circle $C_{1}$. 
Then $\pi_{1}$ maps the points on $g$ to the points 
on the great circle $C_{1}$ that lie between $\psi$ and
$\pi_{1}(E\psi)$. 
(In this context `between' means on the shortest geodesic 
in $ \mathbb{S}(\mathcal{H}) $
connecting $\psi$ and $\pi_{1}(E\psi)$.)
In other words $\pi_{1}(g)$ is the image of the shortest 
geodesic $g'$ in $ \mathbb{S}(\mathcal{H}) $ 
from $\psi$ to $\pi_{1}(E\psi)$. 
The length of this geodesic in the real manifold 
$\mathbb{S}(\mathcal{H}) $ 
is precisely the size in radians of the acute 
Euclidean angle formed by 
its endpoints $\psi$ and $\pi_{1}(E\psi) = E\psi/ || E\psi ||$
or, equivalently, by the acute Euclidean angle $\theta$ 
formed by $\psi$ and $E\psi$. 

However, the argument up to this point does depend on
the choice of $\psi$ such that $x = \pi_{2}(\psi)$. 
It also remains to pass these results down to the
quotient projective space. 
First, note that replacing $\psi$ with $e^{i\beta}\psi$
with $\beta$ real in 
the argument above changes the point 
$\pi_{1}(E\psi) = E\psi/ || E\psi ||$ in $\mathbb{S(\mathcal{H})}$
to 
$$
\pi_{1} ( E(e^{i\beta}\psi) ) = 
E(e^{i\beta}\psi)/ || E(e^{i\beta}\psi) || = 
e^{i\beta}E\psi/ |e^{i\beta}| \, || E\psi || =
e^{i\beta} \pi_{1}(E\psi) 
$$
and therefore
$$
\pi_{3} ( E(e^{i\beta}\psi) ) =
\pi_{2} ( \pi_{1} ( E(e^{i\beta}\psi) )  ) = 
\pi_{2} (  e^{i\beta} \pi_{1}(E\psi) )  = 
\pi_{2} (  \pi_{1}(E\psi) ) =
\pi_{3} (E\psi),
$$
which means that the corresponding point in 
$\mathbb{CP} (\mathcal{H})$ 
does not change under this replacement. 
We now have to show that $y := \pi_{3} (E\psi)$ 
is the point we are seeking. 

The claim is that every other curve in 
$\mathbb{CP} (\mathcal{H})$ from $x$ to $S$
has length strictly greater than $\theta$. 
So consider any 
curve from $x = \pi_{2}(\psi)$ to some
$z \in S$ such that $z \ne \pi_{3} (E \psi) = y$, 
where, as noted above, 
$ y = \pi_{3} (E \psi)$ does not depend on the choice 
of $\psi$. 
We must show this curve is strictly longer. 
It suffices to take the shortest such curve, namely 
the shortest geodesic $\tilde{g}$ in 
$\mathbb{CP} (\mathcal{H})$
from $x$ to $z$. 
Recall that $x \notin S$. 
So then $z \in S$ implies that $x \ne z$. 

Next, we consider the unique horizontal lift $h$ of $\tilde{g}$
to $\mathbb{S} (\mathcal{H})$ 
that starts at $\psi $.
So $h $ is a horizontal geodesic in $\mathbb{S} (\mathcal{H})$
(see Cor. 33 in \cite{terek})
that starts at $\psi$ and terminates at some 
$\phi \in {\rm Ran }\, E \cap \mathbb{S} (\mathcal{H})$ 
that satisfies 
$\pi_{2}(\phi) = z \ne x = \pi_{2}(\psi)$. 
Therefore, $\phi \ne \pm \psi$. 
It follows that the three distinct 
vectors $\pi_{1}(E\psi), \phi, \psi$
are linearly independent
as before over $\mathbb{C}$ and hence over $\mathbb{R}$. 
To see that this is so, note that any linear 
combination of these being equal to zero implies
that the coefficient of $\psi$ is zero,
since $ \pi_{1}(E\psi)$ and $ \phi $ lie
in the subspace ${\rm Ran} \, E$ while $\psi$ 
does not. 
So this linear combination now only includes
the two distinct unit vectors
$ \pi_{1}(E\psi)$ and $ \phi $ and so they 
are linearly independent. 

The argument reduces to analyzing 
 the geometry of a two
dimensional sphere. 
This sphere, denoted by $\Sigma'$, is the
unit sphere in the real three dimensional 
vector space $V$ spanned by the 
unit vectors $\pi_{1}(E\psi), \phi, \psi$. 

Let $\theta'$ denote the Euclidean 
angle between $\phi$ and $\psi$. 
Thus $\cos \theta' = \langle \phi, \psi \rangle_{\mathbb{R}}$  
and, since $ \phi \ne \pm \psi$ we have $\theta' \in (0,\pi)$. 
Recall that $\theta \in (0,\pi/2)$ is the Euclidean angle 
between $\pi_{1}(E\psi) $ and $ \psi$ and also is
the distance between $\pi_{3}(E\psi) $ and $x = \pi_{2}(\psi)$
in the Fubini-Study metric of 
$\mathbb{CP} (\mathcal{H})$. 

Now we pick an orthonormal basis of $V$ with respect to the
inner product $\langle \cdot , \cdot \rangle_{\mathbb{R}}$
such that its Cartesian coordinates satisfy
${\rm Ran} \, E = \{ (r,s,0)  ~|~ r,s \in \mathbb{R}\}$. 
Then
$\psi$ has real coordinates $(x_{1}, x_{2}, x_{3})$ with $x_{3} \ne 0$ 
and $x_{1}^{2} + x_{2}^{2} + x_{3}^{2} = 1$ and $E \psi$
has coordinates $(x_{1}, x_{2}, 0) \ne (0,0,0)$.  
Finally, $\phi$ has real coordinates $(a,b,0)$ such that 
$ a^{2} + b^{2} = 1 $. 
It follows that 
$$
\cos \theta = \langle \psi , \pi_{1}(E\psi) \rangle_{\mathbb{R}} =
\langle (x_{1}, x_{2}, x_{3}), (x_{1}, x_{2}, 0) \rangle_{\mathbb{R}} =
(x_{1}^{2} + x_{2}^{2})^{1/2} > 0.  
$$
On the other hand
$$
|\cos \theta' | = | \langle \phi, \psi \rangle_{\mathbb{R}} | = 
| a x_{1} + b x_{2} | < 
(a^{2} + b^{2})^{1/2} (x_{1}^{2} + x_{2}^{2})^{1/2} =
\cos \theta. 
$$
The strict form of the Cauchy-Schwarz inequality applies
since $\phi $ and $ \psi$ are linearly independent. 
Now we consider two cases. 

In the first case $\cos \theta' \ge 0$, which is equivalent
to $\theta' \in [0, \pi/2]$. 
Then we have $\cos \theta' < \cos \theta$, which given 
the allowed ranges of $\theta$ and $\theta'$ implies that
$\theta' > \theta > 0$. 
Now $\theta'$ is the distance between $\phi$ and $\psi$ 
which turns out also to be the distance between 
$\pi_{2}(\phi)$ and $x = \pi_{2}(\psi)$ in 
$\mathbb{CP} (\mathcal{H})$ because $\theta' \in [0, \pi/2]$. 
And so we have proved that the distance  $\theta'$ from 
$\pi_{2}(\phi)$ to $x$ is strictly greater than 
the distance $\theta$ from 
$y = \pi_{3}(E\psi)$ to $x$,
as desired. 

In the second case $\cos \theta' < 0$, which is equivalent
to $\theta' \in (\pi/2, \pi]$. 
Then we have $- \cos \theta' < \cos \theta$. 
Given the allowed ranges of $\theta$ and $\theta'$ this 
implies\footnote{To see this quickly just graph
$\cos$ on $[0, \pi]$ and note what the last inequality implies about the distances of 
$\theta$ and $\theta'$ from $\pi/2$.} 
$\theta' - \pi/2 < \pi/2 - \theta $, 
or equivalently $\theta < \pi - \theta' \in [0, \pi/2]$. 
Now the shortest geodesic from $\psi$ to $-\phi$ has length 
equal to the angle between $\psi$ to $-\phi$, which is
$ \pi - \theta' $, 
the supplementary angle of $\theta'$. 
This shortest geodesic lies on the same great circle as 
$h$ and so is horizontal.
Therefore, it 
projects down to a geodesic of the
same length $\pi - \theta'$ between $\pi_{2}(\psi)$ and 
$\pi_{2}(-\phi) = \pi_{2}(\phi) = x$. 
And this distance is strictly greater than 
the distance $\theta$ from 
$y = \pi_{3}(E\psi)$ to $x$,
again as desired. 
This finishes the proof of the first assertion.

For the second assertion, note that $x \in S^{\perp^{}}$ just 
means that $\psi$ is orthogonal to 
the `equator' $ ({\rm Ran} \, E) \cap \mathbb{S}(\mathcal{H}) $
or, in other words, the minimal geodesic from $\psi$ 
to any point on the `equator' has length $\pi/2$, which is just
the magnitude in radians of the angle 
between these two points. 
Clearly, the point  $y \in S$ is unique if and only if the
non-empty subspace 
$S$ itself has exactly one point. 
$\quad \blacksquare$

\vskip 0.2cm 
As a further comment on the second assertion, notice 
in that case that the geodesic between $\psi$ and $E \psi = 0$
in $\mathcal{H}$ is $[0,1] \ni t \mapsto t \psi$. 
Moreover, this geodesic lies in $\mathcal{H}^{\rm x}$ for
$t \ne 0  $ and its image in  $\mathbb{CP} (\mathcal{H})$ 
under the map $\pi_{3}$
is $(0,1] \ni t \mapsto x$, which does not terminate in $S$
although it is a geodesic with a smooth continuation
to $[0,1]$.

\section{Probability of One Event}
\label{one-event}

The probability of the occurrence of an event
$E$ given a state represented by 
$\psi \in \mathbb{S}(\mathcal{H})$ 
is given (see \cite{sbs}) by
$$
{\rm P}_{\psi} (E) := || E \psi ||^{2}. 
$$
It is straightforward (see \cite{sbs}) 
to show that the
formula  on the right side defining this probability defines a 
function of the structures 
$\pi'(E)$ and $\pi(\psi)$ 
of $\mathbb{CP} (\mathcal{H})$. 
In other words, for any point 
$x \in \mathbb{CP} (\mathcal{H})$
and complex projective subspace 
$S \subset \mathbb{CP} (\mathcal{H})$ we define 
the probability 
$ {\rm P}_{x} (S) := {\rm P}_{\psi} (E) $, where 
$\pi (\psi) = x$ and 
$S =\pi' (E)$.

It now remains to find a formula
in terms of the structures 
of $\mathbb{CP} (\mathcal{H})$ for this function. 
So we want to find a formula 
for $ {\rm P}_{\psi} (E)$ in terms of $x$ and $S$. 

First, we consider the case $E \psi \ne 0$.
Then 
$\theta  = d (x,S) $ by the
Projection Theorem, where 
$$
\cos \theta =  
\dfrac{|\langle \psi, E\psi \rangle|}
{||\psi||\, || E\psi||} =
\dfrac{|\langle E\psi, E\psi \rangle|}
{|| E\psi||} =
\dfrac{|| E\psi ||^{2}}{|| E\psi||} = || E\psi||. 
$$ 
Here we are using the result that $E = E^{2}$ and that 
$E$ is self-adjoint with respect to the Hermitian inner product. 
It follows that 
\begin{equation}
	\label{one-event-probability}
	{\rm P}_{x} (S) = 
	{\rm P}_{\psi} (E) = || E \psi ||^{2} = \cos^{2} \theta = 
	 \cos^{2} d(x,S). 
\end{equation}
So this gives a formula for the probability 
on the left side in
terms of a geometric property of the projective
space on the right side. 

Next, for the case $E\psi=0$, we have 
$ {\rm P}_{\psi} (E) = || E \psi ||^{2} = 0 $. 
However, we also have $x \in S^{\perp}$ in this case. 
Therefore $d(x,S) = \pi/2$ which in turn implies
$\cos d(x,S) = 0$. 
Consequently, 
$ 	{\rm P}_{x} (S) = \cos^{2} d(x,S)$
holds in this case, too. 

We have proved the following.
\begin{theorem}
	The single event probability ${\rm P}_{\psi} (E)$ in terms of
	the Hilbert space structures 
	$\psi \in \mathbb{S(\mathcal{H})}$ and an event
	$E \in \mathcal{L}(\mathcal{H})$ passes down to the
	probability in the complex projective space 
	${\rm P}_{x} (S) =  \cos^{2} d(x,S)$ of a point
	$x \in  \mathbb{CP} (\mathcal{H})$ and a complex
	projective subspace $S$, where $x = \pi_{2}(\psi)$ and 
	$S = \pi'(E)$. 
\end{theorem}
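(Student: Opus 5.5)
The plan is to reduce everything to the Projection Theorem of Projective Space from Section 3, after first checking well-definedness. First I would show that $\|E\psi\|^{2}$ depends only on $x=\pi_{2}(\psi)$ and $S=\pi'(E)$. Since $\pi'$ is a bijection, $S$ determines $E$. Replacing $\psi$ by $e^{i\beta}\psi$ gives $\|E e^{i\beta}\psi\| = \|E\psi\|$, so the function ${\rm P}_{x}(S)$ is well defined. It then remains to prove the identity ${\rm P}_{x}(S)=\cos^{2} d(x,S)$. The case $S=\emptyset$, i.e. $E=0$, should be set aside (or read with the convention $d(x,\emptyset)=\pi/2$), because the Projection Theorem assumes $S$ non-empty.

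For non-empty $S$, I would split along the dichotomy of the Projection Theorem. If $x\notin S^{\perp}$, equivalently $E\psi\neq 0$, the theorem gives $d(x,S)=d(x,y)$ with $y=\pi_{3}(E\psi)$. It also says this distance equals the projective angle between $x$ and $y$. So
\[
\cos d(x,S)=\frac{|\langle \psi, E\psi\rangle|}{\|\psi\|\,\|E\psi\|}.
\]
Using $E=E^{2}=E^{*}$, the numerator is $\langle E\psi,E\psi\rangle=\|E\psi\|^{2}$, and with $\|\psi\|=1$ this yields $\cos d(x,S)=\|E\psi\|$. Squaring gives ${\rm P}_{\psi}(E)=\cos^{2}d(x,S)$. (Equivalently, one can cite the special case \eqref{projective-probability} applied to $x$ and $y$.)

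If $x\in S^{\perp}$, then $E\psi=0$, so ${\rm P}_{\psi}(E)=0$. The second case of the Projection Theorem gives $d(x,S)=\pi/2$, so $\cos^{2}d(x,S)=0$ and the identity again holds.

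The main subtlety is the step identifying $d(x,S)$ with the projective angle. It rests entirely on the Projection Theorem, in particular on the claim that no other point of $S$ is closer than $y$. Given that theorem, the remaining work is the routine projection algebra above, so the only obstacle I expect is handling the boundary cases $E=0$ and $x\in S$ cleanly.
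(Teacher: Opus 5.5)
Your proof is correct and is essentially the paper's own argument. You split on $E\psi \neq 0$ (i.e.\ $x \notin S^{\perp}$) versus $E\psi = 0$, use the Projection Theorem to identify $d(x,S)$ with the projective angle between $x$ and $\pi_{3}(E\psi)$, derive $\cos d(x,S) = \|E\psi\|$ from $E = E^{2} = E^{*}$, and use $d(x,S) = \pi/2$ in the orthogonal case. Your explicit handling of $S = \emptyset$ is a small but genuine improvement: there $d(x,S)$ is undefined and the Projection Theorem does not apply, yet the paper tacitly applies its second case.
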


This theorem has the special case $S = \pi'(E)$ with 
$E = |\phi \rangle \langle \phi |$ for some 
$\phi \in \mathbb{S(\mathcal{H})}$. 
But in this case $S$ is the one point complex 
projective subspace 
whose point is 
$y:= \pi_{2} (\phi)$. 
Then we write $S = y$ instead of the correct notation
$S = \{ y \}$. 

We take $x = \pi_{2}(\psi)$ for some $\psi \in \mathbb{S(\mathcal{H})}$. 
Then for $ \langle \phi , \psi \rangle \ne 0$,
or equivalently $|\langle x, y \rangle| \ne 0$, 
we have 
$E \psi = |\phi \rangle \langle \phi | \psi =
\langle \phi , \psi \rangle \phi $ 
and therefore 
\begin{align*}
d(x,S) &= 
\langle \psi, E \psi \rangle 
=
\cos^{-1} 
\left( \dfrac{|\langle \psi, 
\langle \phi, \psi \rangle \phi \rangle|}
{|| \psi || \, || \langle \phi, \psi \rangle \phi ||} \right)
= 
\cos^{-1} \left(
\dfrac{|\langle \phi, \psi \rangle|^{2}}{|\langle \phi, \psi \rangle|}
\right) 
\\
&= 
\cos^{-1} (  |\langle \phi, \psi \rangle| ).
\end{align*}

Putting all this together we have 
the known result in \cite{schilling},
namely, 
$$ 
 |\langle \phi, \psi \rangle|^{2} = \cos^{2} d(x,S) = 
\cos^{2} d(x,y). 
$$
The other case $|\langle x, y \rangle| = 0$
is left to the reader.

\section{The consecutive probability of two events}

Suppose that $E$ and $F$ are events. 
The probability of $E$ then $F$ occurring in that time order, 
given a unit vector  $\psi$ representing a state, 
is defined as
$$
{\rm P}_{\psi} (E,F) := || FE \psi ||^{2}.
$$
This is called {\em (quantum) consecutive probability} in
\cite{sbs}. 
This is also known as Wigner's Rule. 
(See \cite{wigner}.)
Notice that the two events have a time order. 
Now in non-relativistic physics this is an absolute relation
valid for all observers. 
But in relativistic physics it means that the two events are
time-like and that time reversal is not taken to be a
symmetry. 

Next it is a question of how to write this probability in
purely geometric terms. 
We first analyze the case when $E \psi \ne 0$. 
Then putting $\phi := E \psi$  and 
$$
\phi' := \phi / || \phi || = \pi_{1} (\phi) = 
\pi_{1} ( E \psi )
$$
we have 
$$
{\rm P}_{\psi} (E,F) = 
|| FE \psi ||^{2} = || F \phi ||^{2} = 
|| F \Big( \dfrac{\phi}{|| \phi ||}  \Big) ||^{2} \cdot 
|| \phi ||^{2} =
{\rm P}_{\phi'} (F) \cdot {\rm P}_{\psi} (E)
$$
Now each of the factors on the right side can be
written in geometric terms by using the result of the
previous section. 
So this yields
$$
{\rm P}_{\psi} (E,F) = \cos^{2} (d (x', S')) \cos^{2} d(x,S), 
$$
where 
$\pi_{2} (\psi) = x$ and 
$S =\pi' (E)$ as before and also 
$\pi_{2} (\phi') = x'$ and 
$S' =\pi' (F)$. 

Next we have translate this to a definition of
${\rm P}_{x} (S, S') $ with no references to the Hilbert space. 
This comes down to showing how $x'$ is geometrically 
related to the other geometric structures. 

Now we see that
$$
x' = \pi_{2} (\phi') = 
\pi_{2} ( \pi_{1} ( E \psi ) ) = 
\pi_{3} ( E \psi ) = {\rm Pj} (x \downarrow S)
$$
where the last identity follows from the 
Projection Theorem. 
Putting all of this together gives the 
definition
$$
{\rm P}_{x} (S, S') := 
\cos^{2} 
\Big(d \big({\rm Pj} (x \downarrow S), S'\big) \Big) \cos^{2} d(x,S)
$$
This gives the desired result for the case $x \notin S^{\perp}$,
which is the condition $ E \psi \ne 0$ expressed in purely geometric terms. 
But it might be easier to read if we re-write
this definition as follows:
$$
{\rm P}_{x} (S, S') := 
\cos^{2}  d(y, S')  \,  \cos^{2} d(x,S) \qquad
{\rm where~} y = d \big({\rm Pj} (x \downarrow S), S'\big).
$$
Moreover, in this form it is now straightforward
to see how this generalizes to the consecutive
probability of any finite sequence of time
ordered events. 

For the case $E \psi = 0$, which is equivalent to
$x \in S^{\perp}$, we have
${\rm P}_{\psi} (E,F)  = 0$. 
Therefore, we define 
${\rm P}_{x} (S, S') := 0$ for all $S'$ in 
the case $x \in S^{\perp}$. 

Conditional probability is also important in
quantum probability theory. 
(See \cite{beltrametti} and \cite{sbs}). 
However, these are defined as quotients of
the consecutive and single event probabilities
and therefore have 
formulas as well in terms of the geometric properties
of the complex projective space.

\section{Accommodating von Neumann algebras}

The exposition so far has been implicitly given in the
context of $\mathcal{L}(\mathcal{H})$, a factor of Type~I. 
However, as we shall now see, this formalism goes through for any
von Neumann algebra which is given as a sub-algebra of some 
$\mathcal{L}(\mathcal{H})$. 
This is an important consideration, 
since Type~II von Neumann algebras arise
in quantum statistical mechanics and 
Type~III von Neumann algebras arise in quantum field theory. 
And after all, the foundations of quantum 
theory should include all possible cases, 
not for example just 
$\mathcal{L}(\mathcal{H})$ with $\mathcal{H}$
finite dimensional, where extra properties hold 
such as all operators being trace class. 

Now the essential point is that the kinematics
in this approach to quantum theory involves
just these three mathematical structures:
events, states and probabilities. 
The events need not be all the events in 
$\mathcal{L}(\mathcal{H})$ but only those
in the appropriate von Neumann sub-algebra.
Similarly, the states need not be all pure
states in 
$\mathcal{L}(\mathcal{H})$, but only those
that are relevant to the system being studied. 
These structures than pass down to the complex
projective space.
If density matrices are required, then these 
can also be passed down. 
(See \cite{sbs}.)
Eventually, one has a lattice of complex 
projective subspaces together with pure and
mixed states represented in the complex
projective space. 
In general, this will be a proper subset 
of the full lattice of complex subspaces
and a proper subset of states. 

The explicit choice of these structures
in the complex projective space is the
{\em quantization} of the physical system 
being studied. 
It follows that quantization is partly a
mathematical problem as well as partly a
physical problem. 
Quantization can be motivated by a classical
mechanical model of the system, 
but that is not necessary. 

Finally, probability must be defined in this
setting without any reference to the 
overlying Hilbert space. 
That is precisely the achievement of 
the present paper.  

It is important to note that 
in this approach the von Neumann algebra 
chosen need not be
a factor, even though physical 
considerations may lead one 
to prefer that in particular cases. 

\section{Concluding Remarks}
This approach is part of a general program to 
present various theories of physics in terms of
geometric structures. 
Thus, classical (conservative) mechanics is viewed in terms 
of symplectic geometry, special relativity in terms of
Minkowski geometry, 
electrodynamics in terms of a
connection in Minkowski geometry, 
general relativity in terms of differential geometry and
thermodynamics in terms of contact geometry. 
Here I am advocating a further 
extension of the known
way of viewing quantum theory in terms of 
complex projective geometry. 
In particular all the probability 
rules of quantum theory
are viewed in terms of the Riemannian metric structure,
which is only a part of the K\"ahler structure of 
complex projective geometry. 
This was already shown in the special case 
considered in \cite{ashtekar-schilling} and \cite{schilling}. 
Notice that the Hilbert space enters in this approach only as
a K\"ahler space which passes this structure to its
associated complex projective space. 
The vector space (also called linear) structure of 
the Hilbert space is not used. 

On possible avenue for further research is to find 
other K\"ahler spaces (or some generalization of them)
which admit theories similar to quantum theory. 
It might be interesting to understand how the
single `star' of quantum theory is distinguished in
the `galaxy' of all such generalizations. 

\vskip 0.6cm
\centerline{Acknowledgment}
\vskip 0.2cm 
I thank Ivo Terek for some very useful comments and for 
bringing references \cite{freed} and \cite{oneill}
to my attention.

\end{document}